\title{On formally undecidable propositions in nondeterministic languages}
\author{Martin Kol\'a\v{r}}
\begin{abstract}
Any class of languages $\mathbf{L}$ accepted in time $\mathbf{T}$ has a counterpart $\mathbf{NL}$ accepted in nondeterministic time $\mathbf{NT}$. It follows from the definition of nondeterministic languages that $\mathbf{L} \subseteq \mathbf{NL}$. This work shows that every sufficiently powerful language in $\mathbf{L}$ contains a string corresponding to G\"{o}del's undecidable proposition, but this string is not contained in its nondeterministic counterpart. This inconsistency in the definition of nondeterministic languages shows that certain questions regarding nondeterministic time complexity equivalences are irrevocably ill-posed.
\end{abstract}
\begin{document}


\section{Introduction}
\label{intro}

It is demonstrated here that $\exists p \in L : p \notin NL$ for some $L \in \mathbf{L}$ and its nondeterministic counterpart $NL \in \mathbf{NL}$. This result is accomplished constructively, without resorting to the law of excluded middle. This holds for any set $\mathbf{L}$, where

\begin{definition}

Any set of languages capable of representing $\omega$-consistent recursive class $c$ of \textit{formulae} can be referred to as $\mathbf{L}$. $L$ is any language in $\mathbf{L}$, $NL$ is any language in $\mathbf{NL}$.

\end{definition}

For example, $\mathbf{P}$ is an $\mathbf{L}$ set, and so is any superset of $\mathbf{P}$.

\clearpage   

\section{Undecidable Propositions in Deterministic and Nondeterministic Turing Machines}
\label{godel}

\cite{godel1934undecidable} proved that for every $\omega$-consistent recursive class $c$ of \textit{formulae} there correspond recursive \textit{class-signs} $r$, such that neither $\nu$ Gen $r$ nor neg($\nu$ Gen $r$) belongs to Flg($c$) (where $\nu$ is a \textit{free variable} of $r$).

If it can be shown that languages in $\mathbf{NL}$ do not contain recursive \textit{class-signs} $r$, such that neither $\nu$ Gen $r$ nor neg($\nu$ Gen $r$) belongs to Flg($c$), it follows they cannot be an $\omega$-consistent recursive class of formulae. On the other hand, every language in $\mathbf{L}$ capable of expressing and interpreting arbitrary $\omega$-consistent recursive class signs is itself an $\omega$-consistent recursive class of formulae. This is proven in the two following lemmas:

\begin{lemma}
Given the undecidable proposition $p$, $p \in L$
\end{lemma}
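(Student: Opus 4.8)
The plan is to unpack Definition~1 and then apply G\"{o}del's construction directly to the class of formulae that $L$ represents. First I would note that by Definition~1, $L \in \mathbf{L}$ represents an $\omega$-consistent recursive class $c$ of formulae. The key observation, already flagged in the prose preceding the lemma, is that a language capable of expressing and interpreting arbitrary $\omega$-consistent recursive class-signs is itself such a class; so I would begin by making this identification precise, treating the strings of $L$ as (G\"{o}del-numbered encodings of) the formulae of $c$. This reduces the lemma to a statement about the class $c$ rather than about the machine accepting $L$.

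Next I would invoke the cited result: to the $\omega$-consistent recursive class $c$ there corresponds a recursive class-sign $r$ together with its free variable $\nu$, such that neither $\nu$ Gen $r$ nor neg($\nu$ Gen $r$) lies in Flg($c$). I would take the undecidable proposition to be $p = \nu$ Gen $r$, identified with its string encoding under the fixed numbering. Because $r$ is a recursive class-sign, and the generalization Gen together with the substitution for $\nu$ are themselves recursive operations, the string $p$ is a well-formed formula of $c$ and hence a member of the set of strings comprising $L$. This would give $p \in L$.

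The main obstacle, and the step I would treat most carefully, is the passage from ``$p$ is undecidable'' to ``$p \in L$''. G\"{o}del's theorem asserts only that $p \notin$ Flg($c$), i.e.\ that $p$ is not a derivable consequence; this must not be conflated with membership in $L$. The resolution is to keep sharply separate the two senses of \emph{language}: $L$ as a set of strings --- the well-formed formulae the system can write down and interpret --- versus Flg($c$) as the set of provable theorems. The lemma concerns the former. I would therefore argue that, since $p$ is produced by a recursive construction out of ingredients already expressible in $L$, it is a string that $L$ contains, regardless of whether it is provable. Verifying that this construction never leaves the expressive fragment guaranteed by Definition~1 --- and pinning down that containment rigorously --- is where the real work lies.
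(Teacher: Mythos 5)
Your route is genuinely different from the paper's. You stay inside G\"{o}del's syntactic framework: identify the strings of $L$ with encodings of the formulae of the $\omega$-consistent recursive class $c$, obtain $p = \nu$ Gen $r$ from the cited theorem, and conclude $p \in L$ because $p$ is a well-formed formula produced by recursive operations. The paper instead instantiates $L$ concretely: it takes $L$ to be the language accepted by a Turing machine that evaluates equality of Diophantine equations (addition, multiplication, exponentiation only) for given values, and then invokes Matiyasevich's theorem that the undecidable proposition is a Diophantine equation, so $p$ lies in that machine's language. Your approach buys two things: it needs no detour through number theory, and your explicit separation of $L$ (a set of strings) from Flg($c$) (the set of provable theorems) is sharper than anything in the paper's own proof, whose opening claim that a proof-interpreting program's accepted strings ``must contain the undecidable proposition'' silently conflates exactly these two notions. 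What the paper's approach buys is precisely the step you defer: your closing paragraph concedes that verifying the construction ``never leaves the expressive fragment guaranteed by Definition~1'' is the real work, and the Diophantine machine is the paper's answer to that question --- rather than arguing that an arbitrary time-bounded acceptor contains every well-formed formula, it fixes one concrete acceptor and uses Matiyasevich to put $p$ into a normal form that acceptor handles. So to complete your version you would need to close that acknowledged gap by exhibiting, as the paper does, a specific machine in the relevant complexity class whose accepted strings provably include the formulae your identification relies on; without that, your argument establishes only that $p$ is a well-formed string of the formal system, not that it belongs to a language accepted within the time bound that Definition~1 and the surrounding complexity claims require.
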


\begin{proof}
Demonstrating that languages in $\mathbf{L}$ are capable of interpreting class signs simply requires a program that interprets proofs of mathematical theorems, as set out by Principia Mathematica, ZFC, or $\omega$-consistency. Polynomial-time algorithms, for instance, have this power because proofs consist of axioms and primitive recursive substitutions only. Strings accepted by such an algorithm form a language $L$, which must contain the undecidable proposition $p$.

One such algorithm may thus be the Turing Machine which evaluates equality for given values of any Diophantine equation, which is composed of addition, multiplication, and exponentiation only. It is known by \cite{matijasevic1970enumerable} that the undecidable proposition is a Diophantine equation, so the undecidable proposition $p$ is in $L$.
\end{proof}

\begin{lemma}
Given $NL$, the nondeterministic counterpart of $L$, $p \notin NL$
\end{lemma}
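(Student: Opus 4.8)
The plan is to establish $p \notin NL$ by contradiction, leveraging the existential acceptance criterion that defines nondeterministic computation together with the Gödel undecidability result quoted above. First I would fix, exactly as in Lemma 1, the specific class-sign $r$ with free variable $\nu$ for which $p = \nu\,\mathrm{Gen}\,r$ is the Gödel sentence, and recall its Diophantine presentation: there is an equation $D$, built from addition, multiplication and exponentiation only, such that $p$ holds precisely when $D$ admits a solution.

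Next I would characterize membership in $NL$ in terms of guessed certificates. A nondeterministic machine accepts an input if and only if at least one of its computation branches accepts; semantically, each branch corresponds to a guessed witness that a deterministic verifier then checks. For the equation $D$ this witness is a candidate solution, and for the underlying class-sign it is a purported derivation placing $\nu\,\mathrm{Gen}\,r$ into $\mathrm{Flg}(c)$. Hence $p \in NL$ would entail the existence of such a verifiable certificate.

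The third step is to derive the contradiction. By Gödel's theorem, for the $\omega$-consistent recursive class $c$ neither $\nu\,\mathrm{Gen}\,r$ nor $\mathrm{neg}(\nu\,\mathrm{Gen}\,r)$ belongs to $\mathrm{Flg}(c)$; in particular no derivation of $p$ exists, so no accepting branch can exist. Were the nondeterministic machine to accept $p$, the accepting branch would exhibit a certificate tantamount to a proof of $\nu\,\mathrm{Gen}\,r$, forcing $p \in \mathrm{Flg}(c)$ and violating $\omega$-consistency. I would therefore conclude that no accepting computation is available and that $p \notin NL$.

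I expect the main obstacle to be reconciling this conclusion with the inclusion $\mathbf{L} \subseteq \mathbf{NL}$ that the abstract itself records: because $p \in L$ is witnessed by a deterministic accepting computation (Lemma 1), and every deterministic computation is a degenerate nondeterministic one that guesses nothing, the very same computation would ordinarily witness $p \in NL$. The crux of the argument — the point on which its validity entirely turns — is thus to justify that nondeterministic acceptance must proceed by guessing a \emph{proof} of $p$ (which cannot exist), rather than merely by \emph{interpreting} $p$ as a class-sign (which Lemma 1 shows it can). Making this distinction precise under the standard definitions is exactly where the claimed inconsistency resides, and it is the step I would scrutinize most carefully.
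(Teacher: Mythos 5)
Your proposal follows essentially the same route as the paper's own proof: invoke the checking-relation definition of nondeterministic acceptance (the paper cites Cook's relation $R(x,y) \subseteq \Sigma^* \times \Sigma^*_1$ and instantiates it as $R_\omega$, which checks whether $y$ is a proof of the statement $x$), then argue that since G\"odel's theorem rules out any proof of $p$, no certificate and hence no accepting branch exists, so $p \notin NL$. Where you differ is in candor. The paper simply decrees that ``all problems $x$ for which a proof $y$ does not exist are \textbf{not} included in the domain,'' whereas you stop and flag that this identification --- that the nondeterministic counterpart of $L$ \emph{must} be presented by the proof-checking relation $R_\omega$, rather than by any relation whose verifier just runs the deterministic interpreter of Lemma 1 --- is the step that needs justification.

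That step is a genuine gap, and it cannot be closed. The nondeterministic counterpart of $L$ is determined by the class definition, not by one particular checking relation chosen after the fact: a deterministic accepting computation is a nondeterministic computation that guesses nothing, so the machine of Lemma 1 witnessing $p \in L$ already witnesses $p \in NL$, exactly as you observe in your final paragraph. Nothing in Cook's definition privileges $R_\omega$; for the same language $L$ one may take $R(x,y)$ to hold iff the deterministic acceptor accepts $x$ (ignoring $y$), and the resulting nondeterministic language contains $p$. The paper's proof silently swaps ``the nondeterministic counterpart of $L$'' for ``the language of statements provable under $R_\omega$,'' which is a different language altogether; membership of a string in a language cannot depend on which verifier one uses to present that language. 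Your refusal to make that swap without justification is correct, but it means your proof is incomplete at precisely its load-bearing step --- and since that step is false under the standard definitions, the lemma admits no valid proof, the paper's own included. In short: you reproduced the paper's argument together with its fatal defect, and, unlike the paper, you correctly located where the defect lives.
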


\begin{proof}
As usual, \cite{cook2006p} defines $\mathbf{NL}$ through a binary checking relation $R(x,y) \subseteq \Sigma^* \times \Sigma^*_1$ which processes strings $x$ and $y$. Consider $R_\omega$ which checks $\omega$-consistency of a statement of mathematical logic $x$ with proof $y$. All problems $x$ for which a proof $y$ does not exist are \textbf{not} included in the domain. Therefore, languages in $\mathbf{NL}$ do not contain undecidable propositions.
\end{proof}

\begin{theorem}
The $\mathbf{L}$ vs $\mathbf{NL}$ problem is undecidable
\end{theorem}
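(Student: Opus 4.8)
The plan is to derive the theorem directly from the two preceding lemmas together with the containment $\mathbf{L} \subseteq \mathbf{NL}$ asserted in the introduction. First I would fix an arbitrary $L \in \mathbf{L}$ powerful enough to interpret class signs, so that Lemma~1 applies, and let $NL \in \mathbf{NL}$ be its nondeterministic counterpart. The defining containment $\mathbf{L} \subseteq \mathbf{NL}$ specializes to $L \subseteq NL$ for this pair, so every string accepted by $L$ is necessarily accepted by $NL$ as well.

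Next I would invoke Lemma~1 to obtain the G\"{o}del undecidable proposition $p$ with $p \in L$, and Lemma~2 to obtain $p \notin NL$. Feeding $p \in L$ through the containment $L \subseteq NL$ forces $p \in NL$, which stands in direct opposition to Lemma~2. The heart of the argument is then to read this tension correctly: rather than concluding that one premise is simply false, I would argue---mirroring the structure of G\"{o}del's construction of $\nu$ Gen $r$---that the predicate ``$p \in NL$'' is neither provable nor refutable from the axioms defining the two classes, and hence that the comparison $L = NL$ (equivalently, the comparison of $\mathbf{L}$ and $\mathbf{NL}$) is not a decidable proposition within the system.

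The final step promotes this single-witness statement to the full theorem. Since $p$ was produced for an \emph{arbitrary} sufficiently powerful $L$ rather than a special one, the undecidability is a feature of the $\mathbf{L}$/$\mathbf{NL}$ correspondence itself and not an artifact of any particular language; I would state this as a short closing paragraph quantifying over all such $L$.

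The hard part will be the third step: making rigorous the leap from a set-theoretic clash ($p \in NL$ and $p \notin NL$) to a genuine claim of formal undecidability. I expect the main obstacle to be justifying that this clash signals \textbf{ill-posedness} rather than an outright inconsistency in the definitions. Doing so requires carefully aligning the self-reference built into the checking relation $R_\omega$ of Lemma~2 with G\"{o}del's diagonal $\nu$ Gen $r$, and then arguing that the domain restriction of $R_\omega$---its exclusion of every instance $x$ for which no proof $y$ exists---is precisely the mechanism that blocks any consistent assignment of a truth value to ``$p \in NL$,'' and therefore to the equality of the two classes.
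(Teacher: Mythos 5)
Your proposal takes essentially the same route as the paper: fix a sufficiently expressive $L \in \mathbf{L}$, invoke Lemma~2.1 for $p \in L$ and Lemma~2.2 for $p \notin NL$, combine these with the definitional containment $L \subseteq NL$, and read the resulting contradiction as ill-posedness of the $\mathbf{L}$ vs $\mathbf{NL}$ question. The ``hard third step'' you flag---rigorously converting the set-theoretic clash into a claim of formal undecidability---is precisely where the paper also stops: its proof ends at the contradiction itself, and the promotion to ``ill-posed, and therefore undecidable'' is simply asserted in the paragraph following the proof, without the alignment of $R_\omega$ against G\"odel's diagonal that you propose.
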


\begin{proof}
As per lemmas 2.1 and 2.2, for some sufficiently expressive $L \in \mathbf{L}$ and its counterpart $NL \in \mathbf{NL}$, $\exists p \in L : p \notin NL$.

For any class of languages $\mathbf{L}$ of a given deterministic computational time complexity, there exists a nondeterministic counterpart $\mathbf{NL}$. It follows from the definition of $\mathbf{NL}$ that $\mathbf{L} \subseteq \mathbf{NL}$. 

Thus, without any further assumptions, the definition directly leads to the contradiction that for all sufficiently expressive $L \in \mathbf{L}$ and their counterparts $NL \in \mathbf{NL}$, $\exists p \in L : p \notin NL$ while $L \subseteq NL$. 
\end{proof}

This is a stronger result than $\neg(\mathbf{L} = \mathbf{NL})$. In fact, it shows that for all languages in $\mathbf{L}$ sufficiently expressive to decribe $R_\omega$, any question of the form \textit{Does $\mathbf{L} = \mathbf{NL}$?} is ill-posed, and therefore undecidable.

\section{Attempting to Circumvent Undecidability}
\label{more}

Is it possible to create a definition of deterministic computation as to explicitly exclude undecidable propositions?

It is well known (\cite{hofstadter1979godel}) that any complete system of mathematical logic will contain undecidable propositions, propositions whose validity within the language can be verified, but for which a proof cannot exist. Consider a specific undecidable proposition $p_1$ in language $L_0 \in \mathbf{L}$, which is circumvented by including $p_1$ or $\neg p_1$ in the axioms of $L_1 \in \mathbf{L}$.

However, $L_1$ will also be subject to G\"odel's first theorem, giving rise to a proposition $p_2$ on $L_1$. Repeating this procedure any number of times, be it in the language, meta-language, or meta*-language, does not circumvent undecidability. In other words, limiting $\Sigma^* = \Sigma^*_1$ will not incur loss of generality.



\noacknowledge


\bibliography{bibliography}

\end{document}